\providecommand{\U}[1]{\protect\rule{.1in}{.1in}}
\newtheorem{theorem}{Theorem}
\newtheorem{acknowledgement}[theorem]{Acknowledgement}
\newtheorem{lemma}[theorem]{Lemma}
\newtheorem{proposition}[theorem]{Proposition}
\newenvironment{proof}[1][Proof]{\noindent\textbf{#1.} }{\ \rule{0.5em}{0.5em}}
\begin{document}

\title{Bypassing the Groenewold--van Hove Obstruction on $\mathbb{R}^{2n}$: A New
Argument in Favor of Born--Jordan Quantization}
\author{Maurice A. de Gosson\thanks{maurice.de.gosson@univie.ac.at}\\University of Vienna\\Faculty of Mathematics, NuHAG\\Oskar-Morgenstern-Platz 1, 1090 Vienna}
\maketitle

\begin{abstract}
There are known obstructions to a full quantization of $\mathbb{R}^{2n}$ in
the spirit of Dirac's approach, the most known being the Groenewold--van Hove
no-go result. We show, following a suggestion of S. K. Kauffmann, that it is
possible to construct a well-defined quantization procedure by weakening the
usual requirement that commutators should correspond to Poisson brackets. The
weaker requirement consists in demanding that this correspondence should only
hold for Hamiltonian functions of the type $T(p)+V(q)$. This reformulation
leads to a non-injective quantization of all observables $H\in\mathcal{S}%
^{\prime}(\mathbb{R}^{2n})$ which, when restricted to polynomials, is the rule
proposed by Born and Jordan in the early days of quantum mechanics.

\end{abstract}

\section{Introduction}

PACS:\ 0-01, 0-03

The problem of quantization harks back to the early years of quantum mechanics
when physicists were confronted to ordering problems (see, in this context,
the well-documented reviews by Ali and Englis \cite{tareq} and Castellani
\cite{20}). In the present paper we will deal more specifically with what is
sometimes called \textquotedblleft canonical quantization\textquotedblright,
which is a procedure for finding the quantum analogue of a classical theory
(in this case Hamiltonian mechanics), while attempting to preserve the formal
structure, such as symmetries, of the classical theory, to the greatest extent
possible. By definition a canonical quantization of $\mathcal{S}^{\prime
}(\mathbb{R}^{2n})$ is a continuous linear map
\[
\operatorname*{Op}:\mathcal{S}^{\prime}(\mathbb{R}^{2n})\longrightarrow
\mathcal{L(S}(\mathbb{R}^{n}),\mathcal{S}^{\prime}(\mathbb{R}^{n}))
\]
associating to each $H\in\mathcal{S}^{\prime}(\mathbb{R}^{2n})$ a continuous
linear operator
\[
\widehat{H}=\operatorname*{Op}(H):\mathcal{S}(\mathbb{R}^{n})\longrightarrow
\mathcal{S}^{\prime}(\mathbb{R}^{n})
\]
and such that some additional properties hold, for instance:

\begin{description}
\item[(CQ1)] $\operatorname*{Op}(q_{j}^{r})\psi(q)=q_{j}^{r}\psi(q)$
\textit{and} $\operatorname*{Op}(p_{j}^{s})\psi(q)=(-i\hbar\partial_{q_{j}%
})^{s}\psi(q)$ \textit{for all} $\psi\in\mathcal{S}(\mathbb{R}^{n})$
(\textit{hence in particular} $\operatorname*{Op}(1)=I_{\mathrm{d}}$);

\item[(CQ2)] (\textit{Born's CCR}) $[\widehat{q}_{j},\widehat{p}_{k}%
]=i\hbar\delta_{jk}$ for all $j,k$;

\item[(CQ3)] \textit{When }$H\in\mathcal{S}^{\prime}(\mathbb{R}^{2n})$
\textit{is real then} $\widehat{H}$ \textit{is a symmetric operator defined on
the dense subspace} $\mathcal{S}(\mathbb{R}^{n})$ of $L^{2}(\mathbb{R}^{n}%
)$\textit{.}
\end{description}

We notice that the usual Weyl quantization satisfies the axioms above; it is
also the preferred quantization in physics since it is in a sense the one
which sticks the closest to the classical structures (it allows to preserve
the covariance of Hamiltonian mechanics under linear canonical
transformations; for detailed accounts see \cite{Birk,TRANS,75}). In addition
to the axioms above, one has tried, following Dirac's program \cite{Dirac}, to
require that commutators correspond (up to a constant) to Poisson brackets:
\begin{equation}
\lbrack\operatorname*{Op}(H),\operatorname*{Op}(K)]=i\hbar\operatorname*{Op}%
(\{H,K\}). \label{Dirac}%
\end{equation}
The rub comes from the seminal papers by Groenewold \cite{groenewold} and van
Hove \cite{vanhove1,vanhove2}, who showed that this \textquotedblleft
commutator $\longleftrightarrow$ Poisson bracket\textquotedblright%
\ correspondence cannot hold for all observables (see \cite{berndt,gotuy,GS}
for detailed discussions). In fact, one shows, after some preparatory work
involving Poisson algebras of polynomials that one is led to a contradiction.
In fact, assuming $n=1$ one sets out to quantize $q^{2}p^{2}$. Using the rules
$\operatorname*{Op}(q^{r})=(\operatorname*{Op}(q))^{r}$ and
$\operatorname*{Op}(p^{r})=(\operatorname*{Op}(p))^{r}$ the application of
(\ref{Dirac}) to the trivial identity%
\[
q^{2}p^{2}=\frac{1}{9}\{q^{3},p^{3}\}=\frac{1}{3}\{q^{2}p,p^{2}q\}
\]
leads to the conflicting formulas%
\begin{equation}
\operatorname*{Op}(q^{2}p^{2})=\frac{1}{9}\operatorname*{Op}\{q^{3}%
,p^{3}\}=(\widehat{q})^{2}(\widehat{p})^{2}-2i\hbar\widehat{q}\widehat
{p}-\frac{2}{3}\hbar^{2} \label{gh1}%
\end{equation}
and
\begin{equation}
\operatorname*{Op}(q^{2}p^{2})=\frac{1}{3}\operatorname*{Op}\{q^{2}%
p,p^{2}q\}=(\widehat{q})^{2}(\widehat{p})^{2}-2i\hbar\widehat{q}\widehat
{p}-\frac{1}{3}\hbar^{2}; \label{gh2}%
\end{equation}
one concludes that there is thus no quantization satisfying Dirac's
correspondence for all monomials.

In the present work we show that these difficulties can be overcome (in a
physically satisfactory way) if one relaxes the general Dirac correspondence
and replaces it with a weaker condition, suggested by Kauffmann
\cite{Kauffmann}, namely that (\ref{Dirac}) only holds for Hamiltonian
functions which are of the type \textquotedblleft generalized kinetic energy
plus potential\textquotedblright\ $T(p)+V(q)$. We will see that this weaker
assumption allows to construct a quantization procedure for all tempered
distributions on $\mathbb{R}^{2n}$ which, when restricted to monomials
$q^{r}p^{s}$, is that proposed by Born and Jordan \cite{16} and which we have
extensively studied \cite{cogoni1,TRANS,physrep,Springer}. This result is thus
another argument in favor of Born--Jordan quantization. (We notice that the
idea of by-passing the Groenewold--van Hove obstruction by some means is not
quite new, see Gotay's paper \cite{gotay}).

Using the notation $X=\{0\}\times\mathbb{R}^{n}$ and $X^{\ast}=\mathbb{R}%
^{n}\times\{0\}$ we will show that

\begin{description}
\item[(BJQ1)] $[\operatorname*{Op}(T),\operatorname*{Op}(V)]=i\hbar
\operatorname*{Op}(\{T,V\})$\textit{ for all} $T\in C^{\infty}(X^{\ast})$ and
$V\in C^{\infty}(X)$ that are $\mathcal{S}^{\prime}(\mathbb{R}^{n})$;
\end{description}

The axiom (BJQ1) will be referred to as the \emph{reduced Dirac condition};
using the linearity of the Poisson bracket, it is equivalent to the axiom:

\begin{description}
\item[(BJQ1bis)] $[\widehat{H},\widehat{K}]=i\hbar\operatorname*{Op}(\{H,K\})$
\textit{for all} $H,K\in C^{\infty}(X)\oplus C^{\infty}(X^{\ast})$
\textit{that are in }$\mathcal{S}^{\prime}(\mathbb{R}^{2n})$\textit{.}
\end{description}

where $C^{\infty}(X)\oplus C^{\infty}(X^{\ast})$ is the space of all functions
$V(x)+T(p)$.

The rule (BJQ1bis) in particular applies to all Hamiltonians of the physical
type \textquotedblleft kinetic energy + potential\textquotedblright. Notice
that (BJQ1) implies that for all integers $r,s>0$ we have
\begin{equation}
\operatorname*{Op}(\{q_{j}^{r},p_{j}^{s}\})=i\hbar rs\operatorname*{Op}%
(q_{j}^{r-1}p_{j}^{s-1}) \label{ihrs}%
\end{equation}
and $\operatorname*{Op}(\{q_{j}^{r},p_{k}^{s}\})=0$ if $j\neq k$.

\section{Quantization of Monomials}

We will use the following commutation relations valid for all operators
$\widehat{q}$ and $\widehat{p}$ satisfying the CCR $[\widehat{q},\widehat
{p}]=i\hbar$:%

\begin{equation}
\lbrack(\widehat{q})^{r},(\widehat{p})^{s}]=si\hbar\sum\limits_{j=0}%
^{r-1}(\widehat{q})^{r-1-j}(\widehat{p})^{s-1}(\widehat{q})^{j}=ri\hbar
\sum\limits_{j=0}^{s-1}(\widehat{p})^{s-1-j}(\widehat{q})^{r-1}(\widehat
{p})^{j} \label{commut1}%
\end{equation}
(we will give a proof of this equality in the Appendix).

\begin{lemma}
\label{lem1}Let $r\geq0$ be an integer. We have
\begin{gather}
\widehat{q_{j}^{r}}=(\widehat{q_{j}})^{r}\text{ \ , \ }\widehat{p_{j}^{r}%
}=(\widehat{p_{j}})^{r}\label{vn1}\\
\widehat{q_{j}p_{j}}=\frac{1}{2}(\widehat{q}_{j}\widehat{p}_{j}+\widehat
{p}_{j}\widehat{q}_{j}). \label{anti1}%
\end{gather}

\end{lemma}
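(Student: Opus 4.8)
The plan is to treat the two displayed identities separately, since (\ref{vn1}) is little more than a restatement of axiom (CQ1) while (\ref{anti1}) is where the reduced Dirac condition genuinely enters.

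For (\ref{vn1}): by (CQ1) the operator $\widehat{q}_j=\operatorname*{Op}(q_j)$ is multiplication by the coordinate $q_j$ and $\widehat{p}_j=\operatorname*{Op}(p_j)$ is $-i\hbar\partial_{q_j}$, while $\operatorname*{Op}(q_j^r)$ is multiplication by $q_j^r$ and $\operatorname*{Op}(p_j^r)=(-i\hbar\partial_{q_j})^r$. Since the $r$-fold product of the operator ``multiplication by $q_j$'' is multiplication by $q_j^r$, and the $r$-fold composite of $-i\hbar\partial_{q_j}$ with itself is $(-i\hbar\partial_{q_j})^r$, the identities $\widehat{q_j^r}=(\widehat{q}_j)^r$ and $\widehat{p_j^r}=(\widehat{p}_j)^r$ hold on $\mathcal{S}(\mathbb{R}^n)$ (with the case $r=0$ reading $\operatorname*{Op}(1)=I_{\mathrm{d}}$). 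Nothing beyond unwinding the definitions is needed.

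For (\ref{anti1}) the main idea is to apply the reduced Dirac condition (BJQ1bis) to the pair $q_j^2\in C^\infty(X)$ and $p_j^2\in C^\infty(X^\ast)$: these are polynomials, hence tempered distributions on $\mathbb{R}^{2n}$, so the axiom applies and gives
\[
[\widehat{q_j^2},\widehat{p_j^2}]=i\hbar\operatorname*{Op}(\{q_j^2,p_j^2\}).
\]
A direct computation of the Poisson bracket yields $\{q_j^2,p_j^2\}=4q_jp_j$, so by linearity of $\operatorname*{Op}$ the right-hand side equals $4i\hbar\,\widehat{q_jp_j}$. For the left-hand side one first invokes the already-proved identities (\ref{vn1}) to rewrite $\widehat{q_j^2}=(\widehat{q}_j)^2$ and $\widehat{p_j^2}=(\widehat{p}_j)^2$, and then evaluates $[(\widehat{q}_j)^2,(\widehat{p}_j)^2]$ using only the CCR $[\widehat{q}_j,\widehat{p}_j]=i\hbar$ of (CQ2): specializing the commutation formula (\ref{commut1}) to $r=s=2$ gives $[(\widehat{q}_j)^2,(\widehat{p}_j)^2]=2i\hbar(\widehat{q}_j\widehat{p}_j+\widehat{p}_j\widehat{q}_j)$. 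Equating the two evaluations of the commutator and cancelling the common factor $2i\hbar$ produces $\widehat{q_jp_j}=\tfrac12(\widehat{q}_j\widehat{p}_j+\widehat{p}_j\widehat{q}_j)$, as claimed.

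Since every ingredient --- (CQ1), (CQ2), (BJQ1bis) and the commutator identity (\ref{commut1}) --- is already at our disposal, I do not anticipate a genuine difficulty here; this lemma serves mainly as a warm-up for the quantization of general monomials. The only points that require a little care are that one must use (\ref{vn1}) to pass from $\widehat{q_j^2}$ to $(\widehat{q}_j)^2$ before applying (\ref{commut1}), and that the Poisson-bracket sign convention must be chosen compatibly with $[\widehat{q}_j,\widehat{p}_j]=i\hbar$ (equivalently $\{q_j,p_j\}=1$) so that $\{q_j^2,p_j^2\}$ comes out as $+4q_jp_j$ rather than its negative.
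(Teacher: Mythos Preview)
Your argument for (\ref{anti1}) is essentially identical to the paper's: apply the reduced Dirac condition to the pair $q_j^2,\,p_j^2$, compute $\{q_j^2,p_j^2\}=4q_jp_j$, evaluate $[(\widehat q_j)^2,(\widehat p_j)^2]$ via (\ref{commut1}), and equate.

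For (\ref{vn1}) you take a different and more direct route than the paper. You simply read the identities off axiom (CQ1), which already stipulates $\operatorname{Op}(q_j^r)\psi=q_j^r\psi$ and $\operatorname{Op}(p_j^s)\psi=(-i\hbar\partial_{q_j})^s\psi$; since $(\widehat q_j)^r$ and $(\widehat p_j)^r$ compute to the same operators, there is nothing to prove. The paper instead derives (\ref{vn1}) from the reduced Dirac condition together with the commutator identity (\ref{commut1}): it uses $[\,\widehat{q^{r+1}},\widehat p\,]=i\hbar(r+1)\,\widehat{q^r}$ from (BJQ1) and the evaluation $[(\widehat q)^{r+1},\widehat p]=i\hbar(r+1)(\widehat q)^r$ from (\ref{commut1}), then matches the two. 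Your shortcut is perfectly valid in the paper's setting, since (CQ1) is part of the standing definition of a canonical quantization; the paper's longer route has the advantage of showing that (\ref{vn1}) is already forced by (BJQ1) and the CCR once one knows $\widehat{q^{r+1}}=(\widehat q)^{r+1}$ (so in effect it checks compatibility rather than supplying new information). Either way the conclusion is the same, and your proof is correct.
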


\begin{proof}
It is sufficient to assume $n=1$ and $r>0$. We have
\[
\lbrack(\widehat{q})^{r+1},\widehat{p}]=i\hbar\operatorname*{Op}%
(\{q^{r+1},p\})=i\hbar(r+1)(\widehat{q})^{r}%
\]
hence, using the second equality (\ref{commut1}),
\[
(\widehat{q})^{r}=\frac{1}{i\hbar(r+1)}[(\widehat{q})^{r+1},\widehat
{p}]=(\widehat{q})^{r}.
\]
The formula $\widehat{p^{r}}=(\widehat{p})^{r}$ is proven by a similar
argument, writing $[\widehat{q},(\widehat{p})^{r+1}]=i\hbar\operatorname*{Op}%
(\{q,p^{r+1}\})$. To prove (\ref{anti1}) it suffices to note that, since
$\{q_{j}^{2},p_{j}^{2}\}=4q_{j},p_{j}$ we have, using the commutation formula
(\ref{commut1}),
\[
\operatorname*{Op}(qp)=\frac{1}{4i\hbar}[(\widehat{q})^{2},(\widehat{p}%
)^{2}]=\frac{1}{2}(\widehat{q}\widehat{p}+\widehat{p}\widehat{q}).
\]

\end{proof}

Let us now show that formula (\ref{ihrs}) allows, as claimed in the
introduction, an unambiguous quantization of monomials in the $q_{j},p_{k}$
variables. We recall \cite{Springer,34,JCPain} that the Born--Jordan
quantization of a monomial $q_{j}^{r}p_{j}^{s}$ is given by the equivalent
formulas
\begin{align}
\operatorname*{Op}\nolimits_{\mathrm{BJ}}(q_{j}^{r}p_{j}^{s})  &  =\frac
{1}{r+1}\sum_{\ell=0}^{r}(\widehat{q_{j}})^{r-\ell}(\widehat{p_{j}}%
)^{s}(\widehat{q_{j}})^{\ell}\label{BJ1}\\
\operatorname*{Op}\nolimits_{\mathrm{BJ}}(q_{j}^{r}p_{j}^{s})  &  =\frac
{1}{s+1}\sum_{\ell=0}^{s}(\widehat{p_{j}})^{r-\ell}(\widehat{q_{j}}%
)^{s}(\widehat{p_{j}})^{\ell}. \label{BJ2}%
\end{align}

\begin{proposition}
\label{prop1}We have for all integers $r,s\geq0$%
\begin{equation}
\operatorname*{Op}(q_{j}^{r}p_{j}^{s})=\operatorname*{Op}%
\nolimits_{\mathrm{BJ}}(q_{j}^{r}p_{j}^{s}). \label{opopbj}%
\end{equation}

\end{proposition}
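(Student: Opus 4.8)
The plan is to induct on $s$ (the power of $\widehat{p}_j$), using the reduced Dirac condition in the form (\ref{ihrs}) to step down from $q_j^r p_j^s$ to $q_j^{r-1} p_j^{s-1}$, and to recognize at each stage that the Born--Jordan expressions (\ref{BJ1})--(\ref{BJ2}) satisfy the same recursion. Throughout I work with $n=1$, since the quantization is diagonal in the index $j$ and monomials $q_j^r p_k^s$ with $j\neq k$ are handled by the vanishing of the corresponding Poisson bracket together with (CQ1).

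First I would dispose of the base cases. For $s=0$ the claim is $\operatorname*{Op}(q^r)=(\widehat q)^r=\operatorname*{Op}_{\mathrm{BJ}}(q^r)$, which is Lemma \ref{lem1} (and symmetrically for $r=0$). The case $r=s=1$ is formula (\ref{anti1}), again from Lemma \ref{lem1}, and it agrees with (\ref{BJ1}) for $r=s=1$. For the inductive step, suppose (\ref{opopbj}) holds for all exponent pairs with smaller value of $r+s$ (or, more precisely, I will induct on $s$ with $r$ arbitrary). Apply $\operatorname*{Op}$ to the identity $\{q^{r+1},p^{s+1}\}=(r+1)(s+1)q^r p^s$ and use (\ref{ihrs}):
\begin{equation}
\operatorname*{Op}(q^r p^s)=\frac{1}{i\hbar(r+1)(s+1)}\,\big[(\widehat q)^{r+1},(\widehat p)^{s+1}\big],
\end{equation}
where I have used (\ref{vn1}) to write $\widehat{q^{r+1}}=(\widehat q)^{r+1}$ and $\widehat{p^{s+1}}=(\widehat p)^{s+1}$. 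So $\operatorname*{Op}(q^r p^s)$ is completely pinned down: it is the commutator on the right, which depends only on the CCR. It then remains to verify the purely algebraic identity
\begin{equation}
\frac{1}{i\hbar(r+1)(s+1)}\,\big[(\widehat q)^{r+1},(\widehat p)^{s+1}\big]=\operatorname*{Op}\nolimits_{\mathrm{BJ}}(q^r p^s),\label{algid}
\end{equation}
with the right-hand side given by (\ref{BJ1}) (or (\ref{BJ2})).

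The main obstacle is therefore not conceptual but computational: establishing (\ref{algid}). I would expand $[(\widehat q)^{r+1},(\widehat p)^{s+1}]$ using the commutation relation (\ref{commut1}), which already gives $[(\widehat q)^{r+1},(\widehat p)^{s+1}]=(s+1)i\hbar\sum_{k=0}^{r}(\widehat q)^{r-k}(\widehat p)^{s}(\widehat q)^{k}$ on the nose; dividing by $i\hbar(r+1)(s+1)$ yields exactly $\frac{1}{r+1}\sum_{k=0}^{r}(\widehat q)^{r-k}(\widehat p)^{s}(\widehat q)^{k}$, which is the Born--Jordan expression (\ref{BJ1}). Using instead the second equality in (\ref{commut1}) reproduces (\ref{BJ2}); the fact that both forms of (\ref{commut1}) give the same operator is precisely the known equivalence of (\ref{BJ1}) and (\ref{BJ2}), so no separate consistency check is needed. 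Finally I would note that (\ref{algid}) forces all the a priori conflicting formulas of the Groenewold--van Hove type (e.g. (\ref{gh1})--(\ref{gh2})) to be reconciled, because only the single Poisson bracket $\{q^{r+1},p^{s+1}\}$ — one involving a pure power of $q$ against a pure power of $p$ — is invoked, and the reduced Dirac condition (BJQ1) guarantees (\ref{Dirac}) precisely for such pairs; brackets like $\{q^2 p,p^2 q\}$, which caused the obstruction, are never used.
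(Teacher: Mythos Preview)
Your argument is correct and is essentially the paper's proof: both compute $\operatorname*{Op}(q^{r}p^{s})=\frac{1}{i\hbar(r+1)(s+1)}[(\widehat q)^{r+1},(\widehat p)^{s+1}]$ from the reduced Dirac condition and then identify this commutator with the Born--Jordan expression via (\ref{commut1}). Note, however, that the induction scaffolding you set up is never actually used---the inductive hypothesis plays no role in your ``inductive step,'' which is a self-contained direct computation valid for all $r,s\geq 0$; the paper accordingly presents the argument without any induction.
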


\begin{proof}
It is sufficient to consider the case $n=1$; we write $q=q_{1}$ and $p=p_{1}$.
Taking the commutation formulas (\ref{commut1}) into account we can rewrite
the definitions (\ref{BJ1}) and (\ref{BJ2}) as%
\begin{equation}
\operatorname*{Op}\nolimits_{\mathrm{BJ}}(q^{r}p^{s})=\frac{1}{i\hbar
(r+1)(s+1)}[(\widehat{q})^{r+1},(\widehat{p})^{s+1}]. \label{fundamental}%
\end{equation}
We have%
\[
q^{r}p^{s}=\frac{1}{(r+1)(s+1)}\{q^{r+1},p^{s+1}\}
\]
and hence, using the axiom (GQ1),
\begin{equation}
\operatorname*{Op}(q^{r}p^{s})=\frac{1}{i\hbar(r+1)(s+1)}[(\widehat{q}%
)^{r+1},(\widehat{p})^{s+1}]; \label{comm}%
\end{equation}
the identity (\ref{opopbj}) follows using formula (\ref{fundamental}).
\end{proof}

Notice that formula (\ref{comm}) is interesting \emph{per se}: it shows that
the Born--Jordan quantization of a polynomial in the position and momentum
variables can be expressed as a linear combination of commutators.

\section{Quantization of $e^{\frac{i}{\hbar}(q_{0}q+p_{0}p)}$}

From now on we assume that $\widehat{q}_{j}$ and $\widehat{p}_{j}$ are the
usual operators \textquotedblleft multiplication by $q_{j}$\textquotedblright%
\ and $-i\hbar\partial_{x_{j}}$ (condition (CQ1). The result below is
essential because it is the key to the quantization of arbitrary observables.

\begin{lemma}
\label{prop2}Let $X(q_{0})=e^{\frac{i}{\hbar}q_{0}q}$ and $Y(p_{0}%
)=e^{\frac{i}{\hbar}p_{0}p}$. Let $\operatorname*{Op}$ be an arbitrary
quantization satisfying the axiom (CQ1). We have
\begin{equation}
\operatorname*{Op}(X(q_{0}))=e^{\frac{i}{\hbar}q_{0}\widehat{q}}\text{ \ and
}\operatorname*{Op}(Y(p_{0}))=e^{\frac{i}{\hbar}p_{0}\widehat{p}}
\label{tictac}%
\end{equation}
that is
\begin{equation}
\operatorname*{Op}(X(q_{0}))\psi(q)=e^{\frac{i}{\hbar}q_{0}q}\psi(q)\text{ \ ,
\ }\operatorname*{Op}(Y(p_{0}))\psi(q)=\psi(q+p_{0}). \label{ttxp}%
\end{equation}

\end{lemma}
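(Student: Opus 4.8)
The plan is to reduce the claim to a one-dimensional statement (as in the preceding proofs) and then exploit the fact that the exponential $X(q_0)=e^{\frac{i}{\hbar}q_0 q}$ is, formally, the ``generating function'' of all monomials $q^r$ through its Taylor expansion $X(q_0)=\sum_{r\geq 0}\frac{1}{r!}\left(\tfrac{i}{\hbar}q_0\right)^r q^r$. First I would invoke the continuity of $\operatorname*{Op}$ as a linear map $\mathcal{S}^{\prime}(\mathbb{R}^{2})\to\mathcal{L}(\mathcal{S}(\mathbb{R}),\mathcal{S}^{\prime}(\mathbb{R}))$ together with the fact that this Taylor series converges to $X(q_0)$ in $\mathcal{S}^{\prime}(\mathbb{R}^{2})$ (for fixed $q_0$, the partial sums converge in the topology of tempered distributions, since $e^{\frac{i}{\hbar}q_0 q}$ is a smooth function of polynomial growth and the remainder is controlled). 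Applying $\operatorname*{Op}$ term by term and using Lemma \ref{lem1}, namely $\widehat{q^r}=(\widehat{q})^r$, gives
\[
\operatorname*{Op}(X(q_0))=\sum_{r\geq 0}\frac{1}{r!}\left(\tfrac{i}{\hbar}q_0\right)^r(\widehat{q})^r=e^{\frac{i}{\hbar}q_0\widehat{q}},
\]
which under (CQ1) is exactly multiplication by $e^{\frac{i}{\hbar}q_0 q}$. The argument for $Y(p_0)$ is identical, using $\widehat{p^r}=(\widehat{p})^r$ and the identification $\widehat{p}=-i\hbar\partial_q$, so that $e^{\frac{i}{\hbar}p_0\widehat{p}}=e^{p_0\partial_q}$ is the translation operator $\psi(q)\mapsto\psi(q+p_0)$.

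The key step where care is needed is the interchange of $\operatorname*{Op}$ with the infinite sum: this is legitimate precisely because $\operatorname*{Op}$ is assumed continuous on $\mathcal{S}^{\prime}(\mathbb{R}^{2})$, so I would make explicit that the partial sums $\sum_{r=0}^{N}\frac{1}{r!}(\tfrac{i}{\hbar}q_0)^r q^r$ converge to $e^{\frac{i}{\hbar}q_0 q}$ in $\mathcal{S}^{\prime}(\mathbb{R}^{2})$ as $N\to\infty$, whence $\operatorname*{Op}$ applied to the partial sums converges to $\operatorname*{Op}(X(q_0))$ in the weak operator topology on $\mathcal{L}(\mathcal{S}(\mathbb{R}),\mathcal{S}^{\prime}(\mathbb{R}))$. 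On the operator side one then needs the matching convergence $\sum_{r=0}^{N}\frac{1}{r!}(\tfrac{i}{\hbar}q_0)^r(\widehat{q})^r\psi\to e^{\frac{i}{\hbar}q_0 q}\psi$ for each $\psi\in\mathcal{S}(\mathbb{R})$, which is elementary since $q\mapsto e^{\frac{i}{\hbar}q_0 q}\psi(q)$ and its derivatives are uniformly approximated by the partial sums on any interval, with the tails controlled by the Schwartz decay of $\psi$.

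The main obstacle is thus making the two convergence statements rigorous in the correct topologies and confirming they are compatible; once that bookkeeping is in place, the identity (\ref{tictac}) follows by matching limits, and (\ref{ttxp}) is just the concrete description of the operators $e^{\frac{i}{\hbar}q_0\widehat{q}}$ and $e^{\frac{i}{\hbar}p_0\widehat{p}}$ under (CQ1). Alternatively, if one prefers to avoid series manipulations entirely, one could characterize $\operatorname*{Op}(X(q_0))$ via the differential equation it satisfies: differentiating $q_0\mapsto X(q_0)$ gives $\partial_{q_0}X(q_0)=\tfrac{i}{\hbar}q\,X(q_0)$, so by linearity and continuity of $\operatorname*{Op}$ together with Lemma \ref{lem1} (to handle $\operatorname*{Op}(q\cdot X(q_0))$, which requires knowing $\operatorname*{Op}$ intertwines multiplication by $q$ appropriately), one obtains $\partial_{q_0}\operatorname*{Op}(X(q_0))=\tfrac{i}{\hbar}\widehat{q}\operatorname*{Op}(X(q_0))$ with initial value $\operatorname*{Op}(1)=I_{\mathrm{d}}$, forcing $\operatorname*{Op}(X(q_0))=e^{\frac{i}{\hbar}q_0\widehat{q}}$; I would likely present the series argument as the cleaner of the two.
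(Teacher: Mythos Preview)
Your approach is essentially the same as the paper's: reduce to $n=1$, expand the exponential as a Taylor series, and pass $\operatorname*{Op}$ through the sum by continuity. Your added discussion of the convergence in $\mathcal{S}'$ and on the operator side is more detailed than what the paper actually writes.

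One small correction: you invoke Lemma~\ref{lem1} to get $\widehat{q^r}=(\widehat{q})^r$, but the present lemma only assumes axiom (CQ1), whereas Lemma~\ref{lem1} was proved using the reduced Dirac condition (BJQ1). You should instead appeal directly to (CQ1), which already says $\operatorname*{Op}(q^r)\psi(q)=q^r\psi(q)$; this is exactly what the paper does, and it makes the detour through Lemma~\ref{lem1} unnecessary. The same remark applies to $\widehat{p^r}$. Your alternative ODE approach would require knowing that $\operatorname*{Op}(q\cdot X(q_0))=\widehat{q}\,\operatorname*{Op}(X(q_0))$, which is not among the stated axioms, so the series argument is indeed the cleaner route.
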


\begin{proof}
It is sufficient to consider the case $n=1$, we write again $q=q_{1}$ and
$p=p_{1}$. Expanding the exponential $e^{\frac{i}{\hbar}q_{0}q}$ in a Taylor
series we have, in view of the continuity of $\operatorname*{Op}$ and using
the first equation (CQ1)%
\[
\operatorname*{Op}(X(q_{0}))\psi(q)=\sum_{k=0}^{\infty}\frac{1}{k!}\left(
\frac{i}{\hbar}q_{0}q\right)  ^{k}\psi(q)=e^{\frac{i}{\hbar}q_{0}q}\psi(q).
\]
Similarly, using the second equation (CQ1),
\[
\operatorname*{Op}(Y(p_{0}))\psi(q)=\sum_{k=0}^{\infty}\frac{1}{k!}\left(
\frac{i}{\hbar}p_{0}(-i\hbar\partial_{q})\right)  ^{k}\psi(q)=\psi(q+p_{0}).
\]

\end{proof}

Let us apply the result above to a quantization of Weyl's characteristic
function \cite{26} $M(q_{0},p_{0})=e^{\frac{i}{\hbar}(q_{0}q+p_{0}p)}$; using
the notation above $M(q_{0},p_{0})=X(q_{0})\otimes Y(p_{0})$. We will see that
$\widehat{M}(q_{0},p_{0})\neq\widehat{T}_{q}(q_{0})\otimes\widehat{Y}(p_{0})$.
In fact:

\begin{proposition}
\label{prop3}In what follows $\operatorname*{Op}$ is a quantization satisfying
the reduced Dirac condition (BJQ1). (i) The operator $\widehat{M}(q_{0}%
,p_{0})=\operatorname*{Op}(M(q_{0},p_{0}))$ is given by the formula%
\begin{equation}
\widehat{M}(q_{0},p_{0})=\operatorname{sinc}\left(  \tfrac{p_{0}q_{0}}{2\hbar
}\right)  e^{\frac{i}{\hbar}(q_{0}\widehat{q}+p_{0}\widehat{p})}
\label{formula}%
\end{equation}
where $\operatorname{sinc}t=\sin t/t$ if $t\neq0$, $\operatorname{sinc}0=1$.
(ii) We have $\widehat{M}(q_{0},p_{0})=0$ for all $(q_{0},p_{0})\in
\mathbb{R}^{2n}$ such that $p_{0}q_{0}\neq0$ and $p_{0}q_{0}\in2\pi
\hbar\mathbb{Z}$.
\end{proposition}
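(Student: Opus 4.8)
The plan is to recognize the (non-admissible) exponential $M(q_0,p_0)=X(q_0)\otimes Y(p_0)$ \emph{inside} a Poisson bracket of two functions to which the reduced Dirac condition \emph{does} apply, and then to read $\widehat M$ off from a commutator of the Heisenberg operators furnished by Lemma \ref{prop2}. Concretely, $X(q_0)=e^{\frac{i}{\hbar}q_0 q}$ belongs to $C^{\infty}(X)$ and $Y(p_0)=e^{\frac{i}{\hbar}p_0 p}$ to $C^{\infty}(X^{\ast})$, and both are bounded smooth functions, hence tempered distributions; so (BJQ1) gives $[\operatorname{Op}(Y(p_0)),\operatorname{Op}(X(q_0))]=i\hbar\operatorname{Op}(\{Y(p_0),X(q_0)\})$. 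A one-line Poisson-bracket computation (only the $\partial_p Y\,\partial_q X$ term survives) yields $\{Y(p_0),X(q_0)\}=\tfrac{p_0 q_0}{\hbar^2}\,M(q_0,p_0)$, so that $\widehat M(q_0,p_0)=\tfrac{\hbar}{i p_0 q_0}\,[\operatorname{Op}(Y(p_0)),\operatorname{Op}(X(q_0))]$ whenever $p_0q_0\neq0$.

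Next I would evaluate that commutator. By Lemma \ref{prop2} (using that (CQ1) is in force from Section 3 on), $\operatorname{Op}(X(q_0))=e^{\frac{i}{\hbar}q_0\widehat q}$ and $\operatorname{Op}(Y(p_0))=e^{\frac{i}{\hbar}p_0\widehat p}$. Since $[\widehat q,\widehat p]=i\hbar$ is central, the Baker--Campbell--Hausdorff identity gives $e^{\frac{i}{\hbar}p_0\widehat p}e^{\frac{i}{\hbar}q_0\widehat q}=e^{\frac{i}{2\hbar}p_0 q_0}e^{\frac{i}{\hbar}(q_0\widehat q+p_0\widehat p)}$, and the product in the reverse order carries the opposite phase; subtracting, $[\operatorname{Op}(Y(p_0)),\operatorname{Op}(X(q_0))]=2i\sin\!\big(\tfrac{p_0 q_0}{2\hbar}\big)\,e^{\frac{i}{\hbar}(q_0\widehat q+p_0\widehat p)}$. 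Substituting into the formula of the first paragraph and using $\tfrac{2\hbar}{p_0q_0}\sin\!\big(\tfrac{p_0q_0}{2\hbar}\big)=\operatorname{sinc}\!\big(\tfrac{p_0q_0}{2\hbar}\big)$ gives (i) on the set $p_0q_0\neq0$. The degenerate locus $p_0q_0=0$ I would settle either by continuity of $\operatorname{Op}$ on $\mathcal S'(\mathbb R^{2n})$ — both sides depend continuously on $(q_0,p_0)$, agree on the dense open set $p_0q_0\neq0$, and $\operatorname{sinc}0=1$ — or directly: if $p_0=0$ then $M=X(q_0)$ and Lemma \ref{prop2} already gives $\widehat M=e^{\frac{i}{\hbar}q_0\widehat q}$, and symmetrically for $q_0=0$.

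Part (ii) then follows at once from (i): if $p_0q_0\in 2\pi\hbar\mathbb Z$ and $p_0q_0\neq0$, then $\tfrac{p_0q_0}{2\hbar}\in\pi\mathbb Z\setminus\{0\}$, so $\sin\!\big(\tfrac{p_0q_0}{2\hbar}\big)=0$ while its argument is nonzero; hence $\operatorname{sinc}\!\big(\tfrac{p_0q_0}{2\hbar}\big)=0$ and $\widehat M(q_0,p_0)=0$.

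The point demanding care is the very first step: one must not try to apply (BJQ1bis) to $M$ itself, which is a product $V(q)T(p)$ rather than a sum $V(q)+T(p)$; the whole argument hinges on the observation that $M$ reappears, up to the explicit constant $\tfrac{p_0q_0}{\hbar^2}$, as $\{Y(p_0),X(q_0)\}$, so that the reduced Dirac condition still pins down $\widehat M$ unambiguously. Getting that constant right is exactly what makes the $\operatorname{sinc}$ normalization come out clean; note, however, that since $\operatorname{sinc}$ is even, the various sign conventions (for the Poisson bracket and for the ordering of the exponentials) do not affect the conclusion. The remaining items — the term-by-term exponentiation behind Lemma \ref{prop2} and the continuity argument at $p_0q_0=0$ — rest only on the assumed continuity of $\operatorname{Op}$.
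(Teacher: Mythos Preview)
Your argument is correct and follows essentially the same route as the paper: compute $\{X(q_0),Y(p_0)\}$ to express $\widehat{M}$ as $\tfrac{i\hbar}{p_0q_0}$ times the commutator of the Heisenberg exponentials from Lemma~\ref{prop2}, then apply Baker--Campbell--Hausdorff to extract the $\operatorname{sinc}$ factor. The only cosmetic difference is that the paper works with $[\widehat X,\widehat Y]$ rather than $[\widehat Y,\widehat X]$, and handles the degenerate case $p_0q_0=0$ by simply declaring it ``obvious'' (your direct appeal to Lemma~\ref{prop2} is more explicit).
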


\begin{proof}
(i) If $q_{0}=0$ or $p_{0}=0$ the result is obvious. Assume $p_{0}q_{0}\neq0$.
The reduced Dirac rule (GQ1) yields%
\begin{align*}
\lbrack\widehat{X}(q_{0}),\widehat{Y}(p_{0})]  &  =i\hbar\operatorname*{Op}%
(\{e^{\frac{i}{\hbar}q_{0}q},e^{\frac{i}{\hbar}p_{0}p}\})\\
&  =\frac{1}{i\hbar}p_{0}q_{0}\operatorname*{Op}(e^{\frac{i}{\hbar}%
(q_{0}q+p_{0}p)})\\
&  =\frac{1}{i\hbar}p_{0}q_{0}\widehat{M}(q_{0},p_{0})
\end{align*}
that is%
\begin{align*}
\widehat{M}(q_{0},p_{0})  &  =\frac{i\hbar}{p_{0}q_{0}}(\widehat{X}%
(q_{0})\widehat{Y}(p_{0})-\widehat{Y}(p_{0})\widehat{X}(q_{0}))\\
&  =\frac{i\hbar}{p_{0}q_{0}}(e^{\frac{i}{\hbar}q_{0}\widehat{q}}e^{\frac
{i}{\hbar}p_{0}\widehat{p}}-e^{\frac{i}{\hbar}p_{0}\widehat{p}}e^{\frac
{i}{\hbar}q_{0}\widehat{q}}).
\end{align*}
In view of the Baker--Campbell--Hausdorff formula
\begin{equation}
e^{A+B}=e^{-\frac{1}{2}[A,B]}e^{A}e^{B}=e^{\frac{1}{2}[A,B]}e^{B}e^{A}
\label{BCH}%
\end{equation}
valid for all operators $A$ and $B$ commuting with $[A,B]$ we have%
\begin{align*}
e^{\frac{i}{\hbar}q_{0}\widehat{q}}e^{\frac{i}{\hbar}p_{0}\widehat{p}}  &
=e^{-\frac{1}{2i\hbar}p_{0}q_{0}}e^{\frac{i}{\hbar}(q_{0}\widehat{q}%
+p_{0}\widehat{p})}\\
e^{\frac{i}{\hbar}p_{0}\widehat{p}}e^{\frac{i}{\hbar}q_{0}\widehat{q}}  &
=e^{\frac{1}{2i\hbar}p_{0}q_{0}}e^{\frac{i}{\hbar}(q_{0}\widehat{q}%
+p_{0}\widehat{p})}%
\end{align*}
and hence%
\[
\widehat{M}(q_{0},p_{0})=\frac{i\hbar}{p_{0}q_{0}}(e^{-\frac{1}{2i\hbar}%
p_{0}q_{0}}-e^{\frac{1}{2i\hbar}p_{0}q_{0}})e^{\frac{i}{\hbar}(q_{0}%
\widehat{q}+p_{0}\widehat{p})}%
\]
which is formula (\ref{formula}). (ii) is obvious.
\end{proof}

\section{The Case of Arbitrary Observables}

Let $H$ be an element of $\mathcal{S}(\mathbb{R}^{2n})$; let $\mathcal{H}$ be
the Fourier transform of $H$, defined by%
\[
\mathcal{H}(q_{0},p_{0})=\left(  \tfrac{1}{2\pi\hbar}\right)  ^{n}\int
H(q,p)e^{-\frac{i}{\hbar}(q_{0}q+p_{0}p)}d^{n}pd^{n}q.
\]
in view of the Fourier inversion formula we have%
\[
H(q,p)=\left(  \tfrac{1}{2\pi\hbar}\right)  ^{n}\int\mathcal{H}(q_{0}%
,p_{0})e^{\frac{i}{\hbar}(q_{0}q+p_{0}p)}d^{n}p_{0}d^{n}q_{0}%
\]
Let $\operatorname*{Op}$ be any quantization; by continuity and linearity we
have%
\[
\operatorname*{Op}(H)=\left(  \tfrac{1}{2\pi\hbar}\right)  ^{n}\int
\mathcal{H}(q_{0},p_{0})\operatorname*{Op}(e^{\frac{i}{\hbar}(q_{0}%
(\cdot)+p_{0}(\cdot))})d^{n}p_{0}d^{n}q_{0}.
\]
Viewing the integral as a distribution bracket, this formula extends to
arbitrary $H\in\mathcal{S}^{\prime}(\mathbb{R}^{2n})$ by continuity, noting
that $\mathcal{S}(\mathbb{R}^{2n})$ is dense in $\mathcal{S}^{\prime
}(\mathbb{R}^{2n})$. This formula shows that every quantization is uniquely
determined by its action of the exponentials $e^{-\frac{i}{\hbar}(q_{0}%
q+p_{0}p)}$ (for a much more general context, see Bergeron and Gazeau
\cite{Gazeau}). For instance, if $\operatorname*{Op}(e^{\frac{i}{\hbar}%
(q_{0}(\cdot)+p_{0}(\cdot))})=e^{\frac{i}{\hbar}(q_{0}\widehat{q}%
+p_{0}\widehat{p})}$ we get the usual Weyl quantization of $H$ of the
observable $H$ \cite{Folland,Birk,75}. Suppose now that
\[
\operatorname*{Op}(e^{\frac{i}{\hbar}(q_{0}(\cdot)+p_{0}(\cdot))})=\widehat
{M}(q_{0},p_{0})
\]
where $\widehat{M}(q_{0},p_{0})$ is defined by formula (\ref{formula}). For
$H\in\mathcal{S}(\mathbb{R}^{2n})$ and $\psi\in\mathcal{S}(\mathbb{R}^{n})$ we
have%
\begin{equation}
\widehat{H}\psi(q)=\left(  \tfrac{1}{2\pi\hbar}\right)  ^{n}\int
\mathcal{H}(q_{0},p_{0})\widehat{M}(q_{0},p_{0})\psi(q)d^{n}p_{0}d^{n}q_{0}.
\label{BJ3}%
\end{equation}
Rewriting (\ref{BJ3}) as a distributional bracket%
\begin{equation}
\widehat{H}\psi=\left(  \tfrac{1}{2\pi\hbar}\right)  ^{n}\langle
\mathcal{H}(\cdot,\cdot),\widehat{M}(\cdot,\cdot)\psi\rangle\label{BJ4}%
\end{equation}
we can extend the definition of $\widehat{H}$ to arbitrary $H\in$
$\mathcal{S}^{\prime}(\mathbb{R}^{2n})$ noting that $\widehat{M}(q_{0}%
,p_{0})\psi\in\mathcal{S}(\mathbb{R}^{n})$.

Choosing $H=1$ we have $\mathcal{H}=(2\pi\hslash)^{n}\delta$ hence
$\langle\mathcal{H}(\cdot,\cdot),\widehat{M}(\cdot,\cdot)\psi\rangle
=(2\pi\hslash)^{n}\psi$ and $\operatorname*{Op}(1)=I_{\mathrm{d}}$ in view of
(\ref{BJ4}).

Part (ii) of Proposition \ref{prop3} shows that the correspondence
$H\longmapsto\widehat{H}$ is not injective: we have%
\[
\operatorname*{Op}(H+%
%TCIMACRO{\tsum \nolimits_{(q_{j},p_{j})\in\Lambda}}%
%BeginExpansion
{\textstyle\sum\nolimits_{(q_{j},p_{j})\in\Lambda}}
%EndExpansion
c_{j}e^{\frac{i}{\hbar}(q_{j}q+p_{j}p)})=\operatorname*{Op}(H)
\]
where $\Lambda$ is any finite lattice in $\mathbb{R}^{2n}$ consisting of
points $(q_{j},p_{j})$ such that $q_{j}p_{j}=2N\pi\hbar$ for an integer
$N\neq0$. The correspondence $H\longmapsto\widehat{H}$ is however surjective:
for every $\widehat{H}\in\mathcal{L(S}(\mathbb{R}^{n}),\mathcal{S}^{\prime
}(\mathbb{R}^{n}))$ there exists (a non-unique) $H\in\mathcal{S}^{\prime
}(\mathbb{R}^{2n})$ such that $\widehat{H}=\operatorname*{Op}(H)$. The proof
of this property is difficult and technical (it relies on the Paley--Wiener
theorem and the theory of division of distributions), and we refer to our
recent paper \cite{cogoni1} with Cordero and Nicola for a detailed treatment
of this issue.

There remains to show that Axiom (GQ3) (symmetry on a dense subspace) is
verified. In fact:

\begin{proposition}
If $H\in\mathcal{S}^{\prime}(\mathbb{R}^{2n})$ is a real distribution, then
$\langle\widehat{H}\psi,\phi\rangle=\langle\psi,\widehat{H}\phi\rangle$ for
all test functions $\phi,\psi\in\mathcal{S}(\mathbb{R}^{n})$.
\end{proposition}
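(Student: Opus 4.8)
The plan is to reduce the symmetry of $\widehat{H}$ to the symmetry of the elementary building blocks $\widehat{M}(q_0,p_0)$, using the integral representation (\ref{BJ4}). First I would observe that for a real distribution $H\in\mathcal{S}'(\mathbb{R}^{2n})$, its symplectic Fourier transform $\mathcal{H}$ satisfies the Hermiticity relation $\overline{\mathcal{H}(q_0,p_0)}=\mathcal{H}(-q_0,-p_0)$ (this is the standard reality condition transcribed through the kernel $e^{-\frac{i}{\hbar}(q_0q+p_0p)}$). So the real content of the proof is: show that, when paired against this Hermitian symbol, the family $\widehat{M}(q_0,p_0)$ assembles into a symmetric operator, i.e. that $\langle\widehat{H}\psi,\phi\rangle=\overline{\langle\widehat{H}\phi,\psi\rangle}$ coincides with $\langle\psi,\widehat{H}\phi\rangle$.

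The key step is the pointwise adjoint identity
\[
\widehat{M}(q_0,p_0)^{*}=\widehat{M}(-q_0,-p_0),
\]
which I would derive directly from formula (\ref{formula}): $\operatorname{sinc}(p_0q_0/2\hbar)$ is real and even in $(q_0,p_0)$, while $e^{\frac{i}{\hbar}(q_0\widehat q+p_0\widehat p)}$ is unitary with inverse $e^{-\frac{i}{\hbar}(q_0\widehat q+p_0\widehat p)}=e^{\frac{i}{\hbar}((-q_0)\widehat q+(-p_0)\widehat p)}$, because $\widehat q,\widehat p$ are (essentially) self-adjoint and the exponent is skew-adjoint. Concretely, from (\ref{ttxp}) together with the Baker--Campbell--Hausdorff splitting already used in Proposition \ref{prop3}, one reads off that $\widehat{M}(q_0,p_0)$ acts as $\psi(q)\mapsto \operatorname{sinc}(p_0q_0/2\hbar)\,e^{\frac{i}{\hbar}(q_0 q+\frac12 q_0p_0)}\psi(q+p_0)$, a displacement-type operator whose $L^2$-adjoint is computed by an elementary change of variables; the result is exactly $\widehat{M}(-q_0,-p_0)$.

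Granting this, I would write, for $\phi,\psi\in\mathcal{S}(\mathbb{R}^n)$,
\[
\langle\widehat{H}\psi,\phi\rangle
=\left(\tfrac{1}{2\pi\hbar}\right)^{n}\!\!\int \mathcal{H}(q_0,p_0)\,\langle\widehat{M}(q_0,p_0)\psi,\phi\rangle\,d^np_0d^nq_0,
\]
then move the adjoint onto $\widehat M$, use $\widehat{M}(q_0,p_0)^{*}=\widehat{M}(-q_0,-p_0)$, substitute $(q_0,p_0)\to(-q_0,-p_0)$ in the integral, and apply the Hermiticity $\mathcal H(-q_0,-p_0)=\overline{\mathcal H(q_0,p_0)}$ to recognize the result as $\langle\psi,\widehat{H}\phi\rangle$. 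The main obstacle is purely a matter of rigor rather than of ideas: the integral in (\ref{BJ4}) is really a distributional pairing, so to push the adjoint through it and perform the change of variables I would first establish the identity for $H\in\mathcal{S}(\mathbb{R}^{2n})$ (where Fubini applies and everything is a genuine absolutely convergent integral), and then extend to real $H\in\mathcal{S}'(\mathbb{R}^{2n})$ by density of $\mathcal S$ in $\mathcal S'$, using that the map $(q_0,p_0)\mapsto\langle\widehat M(q_0,p_0)\psi,\phi\rangle$ is a Schwartz function on $\mathbb{R}^{2n}$ (which follows from $\widehat{M}(q_0,p_0)\psi\in\mathcal{S}(\mathbb{R}^n)$ and the smoothness of $\operatorname{sinc}$) and that the reality condition and the substitution $(q_0,p_0)\to(-q_0,-p_0)$ are both continuous on $\mathcal{S}'(\mathbb{R}^{2n})$. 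This density argument, together with care that the approximating sequence can be taken real, is the only slightly delicate point.
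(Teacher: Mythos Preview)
Your proposal is correct and follows essentially the same approach as the paper: both reduce the question to the Hermiticity relation $\overline{\mathcal{H}(q_0,p_0)}=\mathcal{H}(-q_0,-p_0)$ for real $H$ together with the fact that the $\operatorname{sinc}$ kernel is real and even, so that taking the adjoint amounts to the substitution $(q_0,p_0)\to(-q_0,-p_0)$. The only difference is packaging: the paper invokes a known formula for the formal adjoint of a Cohen-kernel operator from \cite{26}, whereas you derive the adjoint identity $\widehat{M}(q_0,p_0)^{*}=\widehat{M}(-q_0,-p_0)$ by hand and add the density argument for the distributional extension.
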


\begin{proof}
Returning to integral notation for clarity, we begin by remarking that
(\ref{BJ4}) can be rewritten as
\begin{equation}
\widehat{H}\psi(q)=\left(  \tfrac{1}{2\pi\hbar}\right)  ^{n}\int
\mathcal{H}(q_{0},p_{0})\Theta(q_{0},p_{0})e^{\frac{i}{\hbar}(q_{0}\widehat
{q}+p_{0}\widehat{p})}\psi(q)d^{n}p_{0}d^{n}q_{0} \label{BJ5}%
\end{equation}
where the Cohen kernel \cite{26} $\Theta$ is given by
\[
\Theta(q_{0},p_{0})=\operatorname{sinc}\left(  \tfrac{p_{0}q_{0}}{2\hbar
}\right)  .
\]
Operators of the type (\ref{BJ5}) with arbitrary Cohen kernel $\Theta
\in\mathcal{S}^{\prime}(\mathbb{R}^{2n})$ are well-known in the literature and
one proves (\cite{26}, \S 4.7) that the formal adjoint of $\widehat{H}$ is
given by%
\begin{equation}
\widehat{H}^{\dag}\psi(q)=\left(  \tfrac{1}{2\pi\hbar}\right)  ^{n}%
\int\mathcal{H}^{\ast}(-q_{0},-p_{0})\Theta^{\ast}(-q_{0},-p_{0})e^{\frac
{i}{\hbar}(q_{0}\widehat{q}+p_{0}\widehat{p})}\psi(q)d^{n}p_{0}d^{n}q_{0}.
\end{equation}
In the present case we have $\Theta^{\ast}(-q_{0},-p_{0})=\Theta(q_{0},p_{0})$
hence $\widehat{H}^{\dag}=\widehat{H}$ requires that $\mathcal{H}^{\ast
}(-q_{0},-p_{0})=\mathcal{H}(q_{0},p_{0})$, which holds if and only if $H$ is real.
\end{proof}

\section{Discussion and Conclusion}

As follows from the Groenewold--van Hove obstruction the general Dirac
requirement
\begin{equation}
\lbrack\widehat{H},\widehat{K}]=i\hbar\operatorname*{Op}(\{H,K\})
\label{dirac}%
\end{equation}
is not compatible with a full-blown quantization; with some hindsight this can
be understood as follows: the notion of Poisson bracket is intimately related
to the symplectic structure underlying Hamiltonian mechanics (this is pretty
obvious when one works on a symplectic manifold $(M,\omega)$ since the Poisson
bracket is not defined \textit{ex nihilo}, but by contracting the symplectic
form $\omega$ with the Hamiltonian fields $X_{H}$ and $X_{K}$:
$\{H,K\}=i_{X_{K}}i_{X_{H}}\omega$. One could therefore say that, in a sense,
Dirac's condition (\ref{dirac}) tries very hard to force quantum mechanics to
mimic Hamiltonian mechanics by imposing symplectic covariance \cite{RMP}. Now,
it is reasonably well known (see \cite{TRANS} and the references therein) that
the only quantization enjoying such full symplectic covariance is the Weyl
correspondence \cite{26,Folland,Birk,75}. But the Weyl correspondence does not
satisfy the general Dirac condition (\ref{dirac}), as already follows from the
conflicting formulas (\ref{gh1}) and (\ref{gh2}). Also, our restriction of
(\ref{dirac}) to Hamiltonians of the type $H(q,p)=T(p)+V(q)$ shows why the
symplectic covariance properties of Born--Jordan quantization are limited to
linear symplectomorphisms of the type $(q,p)\longmapsto(p,-q)$ or
$(q,p)\longmapsto(L^{-1}q,(L^{-1})^{T}p)$: these are the only, symplectic
automorphisms $S$ (together with their products) for which $H\circ S$ is again
of the type above (see \cite{TRANS,RMP,Springer}).

Our results also makes clear that there can't be any canonical quantization
satisfying Dirac's condition (\ref{dirac}) in full generality, that is for all
functions $H$ and $K$: if such a quantization existed, then it would hold in
particular for $H,K\in C^{\infty}(X)\oplus C^{\infty}(X^{\ast})$. But then
this quantization is that of Born--Jordan, for which (\ref{dirac}) does not
hold for arbitrary $H$ and $K$.\ Notice that this argument actually gives a
new proof of the Groenewold--van Hove result.

A last remark: we have chosen to implement the Dirac correspondence rule
(\ref{dirac}) on a specific subspace of observables, those of the type
$T(p)+V(x);$ these do not form an algebra. It is not clear whether this space
of observables is a maximal one, nor is it clear whether one could recover
some other quantization schemes by changing this space of observables. We will
come back to these delicate questions in the future.

\begin{acknowledgement}
This work has been financed by the grant P27773 of the Austrian research
Foundation FWF. The author wants to express his gratitude to the Referee for
extremely valuable suggestions.
\end{acknowledgement}

\section*{APPENDIX}

Let us prove formula (\ref{commut1}). We begin by noting that the equalities%
\begin{align}
\lbrack(\widehat{q})^{r},(\widehat{p})^{s}]  &  =si\hbar\sum\limits_{j=0}%
^{r-1}(\widehat{q})^{r-1-j}(\widehat{p})^{s-1}(\widehat{q})^{j}\label{commuta}%
\\
\lbrack(\widehat{q})^{r},(\widehat{p})^{s}]  &  =ri\hbar\sum\limits_{j=0}%
^{s-1}(\widehat{p})^{s-1-j}(\widehat{q})^{r-1}(\widehat{p})^{j}
\label{commutb}%
\end{align}
are equivalent. In fact, swapping $\widehat{q}$ and $\widehat{p}$ in
(\ref{commuta}) amounts to changing the bracket $[\widehat{q},\widehat
{p}]=i\hbar$ into $[\widehat{p},\widehat{q}]=-i\hbar$ so that%
\[
\lbrack(\widehat{p})^{r},(\widehat{r})^{s}]=-si\hbar\sum\limits_{j=0}%
^{r-1}(\widehat{p})^{r-1-j}(\widehat{q})^{s-1}(\widehat{p})^{j};
\]
swapping $r$ and $s$ then yields (\ref{commutb}), taking into account the
antisymmetry of the commutator bracket. Let us prove (\ref{commutb}) by
induction on the integer $s\geq1$. Let $s=1$; then $[(\widehat{q}%
)^{r},\widehat{p}]=(\widehat{q})^{r}\widehat{p}-\widehat{p}(\widehat{q})^{r}$
and we have, by repeated use of $[\widehat{q},\widehat{p}]=i\hbar$
\begin{align*}
\widehat{p}(\widehat{q})^{r}  &  =\widehat{p}\widehat{q}(\widehat{q}%
)^{r-1}=\widehat{q}\widehat{p}(\widehat{q})^{r-1}-i\hbar(\widehat{q})^{r-1}\\
&  =(\widehat{q})^{r}\widehat{p}-ri\hbar(\widehat{q})^{r-1}%
\end{align*}
that is $[(\widehat{q})^{r},\widehat{p}]=ri\hbar(\widehat{q})^{r-1}$ which
proves (\ref{commutb}) in this case. Let now $s$ be an arbitrary integer
$\geq2$ and assume that%
\begin{equation}
\lbrack(\widehat{q})^{r},(\widehat{p})^{s-1}]=ri\hbar\sum\limits_{j=0}%
^{s-2}(\widehat{p})^{s-2-j}(\widehat{q})^{r-1}(\widehat{p})^{j}.
\label{assumption}%
\end{equation}
We then have%
\begin{align*}
\lbrack(\widehat{q})^{r},(\widehat{p})^{s}]  &  =(\widehat{q})^{r}(\widehat
{p})^{s}-(\widehat{p})^{s}(\widehat{q})^{r}\\
&  =(\widehat{q})^{r}(\widehat{p})^{s-1}\widehat{p}-(\widehat{p}%
)^{s-1}\widehat{p}(\widehat{q})^{r}\\
&  =(\widehat{q})^{r}(\widehat{p})^{s-1}\widehat{p}-(\widehat{p}%
)^{s-1}((\widehat{q})^{r}\widehat{p}-ri\hbar(\widehat{q})^{r-1})\\
&  =[(\widehat{q})^{r},(\widehat{p})^{s-1}]\widehat{p}+ri\hbar(\widehat
{p})^{s-1}(\widehat{q})^{r-1}.
\end{align*}
In view of assumption (\ref{assumption}) this is%
\begin{align*}
\lbrack(\widehat{q})^{r},(\widehat{p})^{s}]  &  =ri\hbar\sum\limits_{j=0}%
^{s-2}(\widehat{p})^{s-2-j}(\widehat{q})^{r-1}(\widehat{p})^{j+1}%
+ri\hbar(\widehat{p})^{s-1}(\widehat{q})^{r-1}\\
&  =ri\hbar\sum\limits_{j=0}^{s-1}(\widehat{p})^{s-1-j}(\widehat{q}%
)^{r-1}(\widehat{p})^{j}%
\end{align*}
which completes the proof.


\begin{thebibliography}{99}                                                                                               %


\bibitem {tareq}T. S. Ali and M. Engli\v{s}. Quantization methods: a guide for
physicists and analysts. Rev. Math. Phys. 17(04) (2005) 391--490.

\bibitem {gazeau}H. Bergeron and J. P Gazeau. Integral quantizations with two
basic examples. Annals of Physics 344 (2014) 43--68.

\bibitem {berndt}R. Berndt. An introduction to symplectic geometry.
Providence, Rhode Island: American Mathematical Society, 2001.

\bibitem {13}P. Boggiatto, G. De Donno, A. Oliaro, Time-Frequency
Representations of Wigner Type and Pseudo-Differential Operators, Trans. Amer.
Math. Soc., 362(9) 4955--4981 (2010).

\bibitem {16}M. Born, P. Jordan. Zur Quantenmechanik, Zeits. Physik 34,
858--888 (1925).

\bibitem {20}L. Castellani, Quantization Rules and Dirac's Correspondence, Il
Nuovo Cimento 48A(3), (1978) 359--368.

\bibitem {chernoff}P. R. Chernoff. Mathematical obstructions to quantization.
Hadronic J. 4 (1981) 879--898.

\bibitem {26}L. Cohen. The Weyl operator and its generalization. Springer
Science \& Business Media, 2012.

\bibitem {cogoni1}E. Cordero, M. de Gosson, and F. Nicola. On the
invertibility of Born--Jordan quantization. Journal de Math\'{e}matiques Pures
et Appliqu\'{e}es (2015).

\bibitem {Dirac}P. A. M. Dirac, Principles of Quantum Mechanics. USA: Oxford
University Press, 1982.

\bibitem {34}H. B. Domingo, E. A. Galapon. Generalized Weyl transform for
operator ordering:\ Polynomial functions in phase space. J.\ Math. Phys. 56,
022104 (2015).

\bibitem {Folland}G. B. Folland. Harmonic Analysis in Phase space, Annals of
Mathematics studies, Princeton University Press, Princeton, N.J. 1989.

\bibitem {Birk}M. de Gosson, Symplectic Geometry and Quantum Mechanics.
Birkh\"{a}user, Basel, series \textquotedblleft Operator Theory: Advances and
Applications\textquotedblright\ (subseries: \textquotedblleft Advances in
Partial Differential Equations\textquotedblright), Vol. 166, (2006).

\bibitem {52}M. de Gosson. On the usefulness of an index due to Leray for
studying the intersections of Lagrangian and symplectic paths. J. Math. Pures
Appl. 91, 598--613 (2009).

\bibitem {TRANS}M. de Gosson. Symplectic covariance properties for Shubin and
Born--Jordan pseudo-differential operators. Trans. Amer. Math. Soc. 365(6)
(2013), 3287--3307.

\bibitem {RMP}M. de Gosson. Paths of canonical transformations and their
quantization. Rev. Math. Phys. 27(6) (2015) 1530003.

\bibitem {physrep}M. de Gosson. From Weyl to Born--Jordan quantization: the
Schr\"{o}dinger representation revisited. Phys. Rep. 623 (2016) 1--58.

\bibitem {Springer}M. de Gosson. Born-Jordan Quantization: Theory and
Applications. Springer 2016.

\bibitem {gotay}M. J. Gotay. On the Groenewold--Van Hove problem for
$\mathbb{R}^{2n}$ J. Math. Phys. 40(4) (1999) 2107--2116.

\bibitem {gotuy}M. J. Gotay, H. B. Grundling, and G. M. Tuynman. Obstruction
results in quantization theory. J. Nonlinear Sci. 6(5) (1996) 469--498.

\bibitem {groenewold}H. J. Groenewold. On the principles of elementary quantum
mechanics. Physica 12 (1946) 405--460.

\bibitem {GS}V. Guillemin and S. Sternberg. Symplectic techniques in physics.
Cambridge University Press, 1990.

\bibitem {Kauffmann}S. K. Kauffmann. Unambiguous quantization from the maximum
classical correspondence that is self-consistent: the slightly stronger
canonical commutation rule Dirac missed. Found. Phys. 41(5) (2011) 805--819.

\bibitem {75}R. G. Littlejohn. The semiclassical evolution of wave packets,
Phys. Rep. 138, 4-5 193--291 (1986).

\bibitem {JCPain}J.-C. Pain. Commutation relations of operator monomials.
Journal of Physics A: Mathematical and Theoretical 46(3) (2012) 035304.

\bibitem {vanhove1}L. van Hove. Sur certaines repr\'{e}sentations unitaires
d'un group infini de transformations. Proc. Roy. Acad. Sci. Belgium 26, 1--102.

\bibitem {vanhove2}L. van Hove. Sur le probl\`{e}me des relations entre les
transformations unitaires de la m\'{e}canique quantique et les transformations
canoniques de la m\'{e}canique classique. Acad. Roy. Belgique Bull. Cl. Sci.
(5) 37 (1951) 610--620.

\bibitem {101}M. A. Shubin. Pseudodifferential Operators and Spectral Theory,
Springer-Verlag, 1987 [original Russian edition in Nauka, Moskva 1978].

\bibitem {wilcox}R. M. Wilcox. Exponential operators and parameter
differentiation in quantum physics. Journal of Mathematical Physics 8(4)
(1967) 962--982.
\end{thebibliography}
\end{document}